\tikzstyle{vertex}=[auto=left,circle,draw=black]
\tikzstyle{label}=[auto=left,draw=none,node distance=0.5cm,font=\footnotesize]
\newcommand{\defparproblem}[4]{
  \vspace{1mm}
\noindent\fbox{
  \begin{minipage}{0.8\textwidth}
  \begin{tabular*}{0.9\textwidth}{@{\extracolsep{\fill}}lr} #1\end{tabular*}\\
  {\bf{Input:}} #2  \\
   {\bf{Parameter:}} #3 \\ 
  {\bf{Problem:}} #4
  \end{minipage}
  }
  \vspace{1mm}
}
\newtheorem{theorem}{Theorem}
\newtheorem{definition}{Definition}
\newtheorem{lemma}{Lemma}
\newtheorem*{IL}{Isolation Lemma}
\def\Iver#1{ \llbracket #1 \rrbracket }
\title{A Fast Parameterized Algorithm for Co-Path Set}
\author{Blair D. Sullivan, Andrew van der Poel \\
    \footnotesize
    Department of Computer Science, North Carolina State University, \\
    \footnotesize Raleigh, NC 27695\\
    \footnotesize
    \texttt{\{blair\_sullivan,ajvande4\}@ncsu.edu}
}
\begin{document}

\def\mainAlg{\texttt{copath}\xspace}
\def\twAlg{\texttt{tw-copath}\xspace}

	\maketitle	
	\begin{abstract}
		The \textsc{$k$-Co-Path Set} problem asks, given a graph $G $ and a positive integer $k$, whether one can delete $k$ edges from $G$ so that the remainder is a collection of disjoint paths. We give a linear-time fpt algorithm with complexity $O^*(1.588^k)$ for deciding \textsc{$k$-Co-Path Set}, significantly improving the previously best known $O^*(2.17^k)$ of Feng, Zhou, and Wang (2015). Our main tool is a new $O^*(4^{tw(G)})$ algorithm for \textsc{Co-Path Set} using the Cut\&Count framework, where $tw(G)$ denotes treewidth. In general graphs, we combine this with a branching algorithm which refines a $6k$-kernel into reduced instances, which we prove have bounded treewidth.

	\end{abstract}	
	\section{Introduction}\label{sec:intro}
	We study parameterized versions of \textsc{Co-Path Set}~\cite{cpsreduction, zhang}, an NP-complete problem asking 
for the minimum number of edges whose deletion from a graph results in a collection of disjoint paths (the deleted edges being a {\em co-path set} -- see Figure~\ref{fig:min_ex}).  
Specifically, we are concerned with \textsc{$k$-Co-Path Set}, which uses the natural parameter of the number of edges deleted. \looseness-1

\begin{center}
\defparproblem{\textsc{$k$-Co-Path Set}}{A graph $G = (V,E)$ and a non-negative integer $k$.}{$k$}{Does there exist $F \subseteq E$ of size exactly $k$ such that $G[E\setminus F]$ is a set of disjoint paths?}
\end{center}

These problems are naturally motivated by determining the ordering of {genetic markers} in DNA using fragment data created by breaking chromosomes with gamma radiation (a technique known as \textit{radiation hybrid mapping})~\cite{cox, richard, slonim}. Unfortunately, human error in distinguishing markers often means the constraints implied by markers' co-occurrence on fragments are incompatible with all possible linear orderings, necessitating an algorithm to find the ``best'' ordering (that violates the fewest constraints).
\textsc{Co-Path Set} solves the special case where each DNA fragment contains exactly two genetic markers (corresponding to an edge in the graph); any linear ordering of the markers must correspond to some set of paths, and we minimize the number of unsatisfied constraints (edges in the co-path set). 

 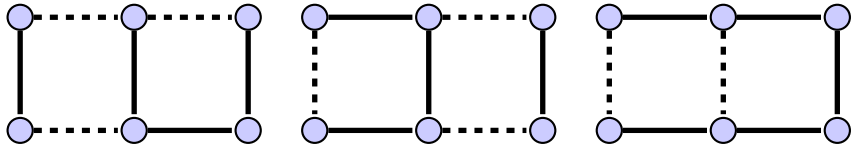
\begin{figure}
	\centering
	\begin{minipage}{0.31\textwidth}
\begin{tikzpicture}[>=stealth',shorten >=1pt,auto,node distance=1.5cm,
  thick,main node/.style={circle,fill=blue!20,draw,font=\sffamily\bfseries},path/.style={rectangle,fill=blue!20,draw,font=\sffamily\bfseries},bunch/.style={ellipse,fill=blue!20,draw,font=\sffamily\bfseries}]

\node[main node] (1) {};
  \node[main node] (2) [below of=1] {};
  \node[main node] (3) [right of=1] {};
      \node[main node] (4) [below of=3]{};
	\node[main node] (5) [right of=3] {};
      \node[main node] (6) [below of=5]{};

  \draw[every node/.style={font=\sffamily\small}, line width = 0.7mm]
    (1) edge  node[above] {} (2)
        edge [dashed] node [below] {} (3)
    (2) edge [dashed] node [above] {} (4)
    (3) edge  node [above] {} (4)
        edge [dashed] node [below] {} (5)
      (4)  edge node [below] {} (6)
       (5) edge node [below] {} (6);

\end{tikzpicture}
\end{minipage}
\begin{minipage}{0.31\textwidth}
\begin{tikzpicture}[>=stealth',shorten >=1pt,auto,node distance=1.5cm,
  thick,main node/.style={circle,fill=blue!20,draw,font=\sffamily\bfseries},path/.style={rectangle,fill=blue!20,draw,font=\sffamily\bfseries},bunch/.style={ellipse,fill=blue!20,draw,font=\sffamily\bfseries}]

\node[main node] (1) {};
  \node[main node] (2) [below of=1] {};
  \node[main node] (3) [right of=1] {};
      \node[main node] (4) [below of=3]{};
	\node[main node] (5) [right of=3] {};
      \node[main node] (6) [below of=5]{};

  \draw[every node/.style={font=\sffamily\small}, line width = 0.7mm]
    (1) edge[dashed]  node[above] {} (2)
        edge node [below] {} (3)
    (2) edge node [above] {} (4)
    (3) edge  node [above] {} (4)
        edge [dashed] node [below] {} (5)
      (4)  edge [dashed] node [below] {} (6)
       (5) edge node [below] {} (6);

\end{tikzpicture}

\end{minipage}
\begin{minipage}{0.31\textwidth}
\begin{tikzpicture}[>=stealth',shorten >=1pt,auto,node distance=1.5cm,
  thick,main node/.style={circle,fill=blue!20,draw,font=\sffamily\bfseries},path/.style={rectangle,fill=blue!20,draw,font=\sffamily\bfseries},bunch/.style={ellipse,fill=blue!20,draw,font=\sffamily\bfseries}]

\node[main node] (1) {};
  \node[main node] (2) [below of=1] {};
  \node[main node] (3) [right of=1] {};
      \node[main node] (4) [below of=3]{};
	\node[main node] (5) [right of=3] {};
      \node[main node] (6) [below of=5]{};

  \draw[every node/.style={font=\sffamily\small}, line width = 0.7mm]
    (1) edge[dashed]  node[above] {} (2)
        edge node [below] {} (3)
    (2) edge node [above] {} (4)
    (3) edge [dashed]  node [above] {} (4)
        edge node [below] {} (5)
      (4)  edge node [below] {} (6)
       (5) edge node [below] {} (6);

\end{tikzpicture}

\end{minipage}
	\caption{Three co-path sets (dashed edges), including one of minimum size (rightmost).}
	\label{fig:min_ex}
\end{figure}

Recent algorithmic results related to \textsc{Co-Path Set} include a $(10/7)$-approximation algorithm~\cite{chen}, and two parameterized algorithms deciding \textsc{$k$-Co-Path Set}~\cite{feng, feng2}, the faster of which~\cite{feng2} has time complexity\footnote{Throughout this paper, we use the notation $O^*(f(k))$ for the fpt (fixed-parameter tractable) complexity $O(f(k) n^{O(1)})$; we say an algorithm is {\em linear-fpt} if the complexity is $O(f(k) n)$.} $O^*(2.17^k)$.  However, as written, both parameterized results~\cite{feng, feng2} contain a flaw in their analysis which invalidates their probability of a correct solution in the given time\footnote{Step 2.11 in both versions of Algorithm R-MCP checks if a candidate co-path set $F$ has size $\leq k_1$ (as they are sweeping over all possible sizes of candidates and want to restrict the size accordingly). If $F$ is too large, the algorithm discards it and continues to the next iteration. However, in order for their analysis to hold, the probability that the candidate is contained in a co-path set must be $\geq (1/2.17)^{k_1}$ (or $(1/2.29)^{k_1}$ in~\cite{feng}) for \emph{every} iteration. Candidates which are too large may have significantly smaller probability of containment, yet are counted in the exponent of the analysis.}.
The best known bound prior to~\cite{feng} is an $O^*(2.45^k)$ algorithm~\cite{zhang}. 
In this paper, we prove:  \looseness-1
 \begin{theorem}
 \textsc{$k$-Co-Path Set} is decidable in $O^*(1.588^k)$ linear-fpt time with probability at least $2/3$.
 \label{main}
 \end{theorem}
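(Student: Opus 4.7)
\begin{proofsketch}
The plan is to combine three ingredients in the manner suggested by the abstract: a linear-size kernel, a branching procedure whose leaves are reduced instances, and the Cut\&Count routine of complexity $O^*(4^{\tw(G)})$.

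First, apply the known $6k$-vertex kernelization in linear time. After this step the instance has $O(k)$ vertices and $O(k)$ edges, and from here on the complexity analysis is purely in $k$; the linear-fpt running time then follows so long as no later step revisits the original $n$.

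Second, design exhaustive branching rules targeting the local configurations that obstruct a bounded-treewidth conclusion, chief among them vertices of degree three or more. In any feasible solution every remaining vertex has degree at most two, so if $\deg(v) \geq 3$ then at least $\deg(v)-2$ incident edges must lie in the deletion set $F$; branching on which edges at $v$ survive drops $k$ in every child and either finishes the case or forces $v$ into a low-degree configuration. Auxiliary rules address short cycles, triangles sharing edges, and other patterns whose presence prevents a narrow tree decomposition. Each leaf of the search tree is then a \emph{reduced} graph containing none of the triggering configurations---essentially a near-disjoint union of short paths, cycles, and bounded local attachments.

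Third, invoke the $O^*(4^{\tw(G)})$ algorithm on each leaf. Since $4^{1/3} < 1.588$, a sufficient condition for the target exponent is that reduced graphs admit tree decompositions of width roughly $k/3$ (with any surplus absorbed by the branching factor), which I expect to prove by constructing a decomposition directly from the near-linear skeleton of the reduced graph. The Cut\&Count step is Monte Carlo with one-sided error and success probability $\geq 1/2$; a constant number of independent repetitions amplifies this to the required $2/3$.

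The main obstacle is step two. Simultaneously (i)~choosing a small family of branching rules whose exhaustion leaves enough structural regularity, (ii)~proving the resulting treewidth bound by an explicit decomposition, and (iii)~balancing the branching vector against the $4^{\tw}$ cost at the leaves so that the combined exponent stays below $\log_2 1.588 \approx 0.667$ is a delicate three-way tuning problem. I expect the treewidth analysis of reduced instances to be the heaviest piece: it must exploit exactly the structural restrictions carved out by the branching rules, and the constants have to match the targeted base of $1.588$ with essentially no slack.
\end{proofsketch}
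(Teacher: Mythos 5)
Your overall architecture (linear $6k$-kernel, branching to reduced instances, an $O^*(4^{\tw})$ Cut\&Count routine at the leaves, and the observation that $4^{1/3}<1.588$) matches the paper, but the heart of the argument---your step two---is where the proposal breaks, and you have flagged it as the obstacle without resolving it. Branching on \emph{every} vertex of degree at least $3$ cannot meet the stated budget: at a degree-$3$ vertex there are $\binom{3}{2}=3$ choices of surviving edge pair, each deleting only one edge, so the search tree can have $3^{k}$ leaves, which already exceeds $1.588^{k}$ on its own. The paper instead branches only on vertices of degree at least $D+1$ with $D=10$, so each branching step has $\binom{11}{2}=55$ children but deletes $D-1=9$ edges, giving $55^{k_1/9}=O(1.561^{k_1})$ leaves; vertices of degree $3$ through $10$ are left untouched. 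Consequently the reduced instances are \emph{not} near-disjoint unions of short paths and cycles, and no auxiliary rules for triangles or short cycles appear anywhere (acyclicity is certified inside the Cut\&Count routine by the marker-counting argument, not by preprocessing).

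The second missing idea is how to bound the treewidth of a leaf that still contains vertices of degree up to $10$. This is not done by exhibiting a decomposition of a structural skeleton; it uses the Fomin--Gaspers bound $\tw(G)\le\sum_{i=3}^{17}c_i n_i+n_{\geq 18}+\epsilon n$ for graphs of bounded degree, combined with a counting argument specific to \textsc{Co-Path Set}: in any yes-instance each degree-$i$ vertex forces at least $i-2$ incident deletions, so $\sum_{i\ge3}(i-2)n_i\le 2k$. Since the tabulated constants satisfy $c_i\le(i-2)/6$ for $3\le i\le 17$, this yields $\tw\le k/3+\epsilon n+O(1)$, exactly the width needed for $4^{k/3}\approx1.5874^k$. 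Without a replacement for this pair of lemmas, the ``delicate three-way tuning problem'' you identify has no solution along the route you sketch: either you branch down to maximum degree $2$ and pay far too much in the search tree, or you stop earlier and then need precisely the degree-sequence treewidth bound you do not have. (Your amplification remark about the success probability is harmless but unnecessary; the paper obtains $2/3$ directly by choosing $N=3|U|$ in the Isolation Lemma.)
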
 
We note that standard amplification arguments apply, and Theorem \ref{main} holds for any success probability less than $1$. Further, if $f$ is an increasing function with $\lim_{n \rightarrow \infty} f(n) = 1$, we can solve \textsc{$k$-Co-Path Set} with success probability at least $f(n)$ in $O(1.588^kn\polylog(n))$. 
 
The remainder of this paper is organized as follows: after essential definitions and notation in Section~\ref{sec:prelim}, we start in Section~\ref{sec:cutcount} by giving a new $O^*(4^{tw(G)})$ algorithm \twAlg for solving \textsc{Co-Path Set} parameterized by treewidth ($tw$) using the Cut\&Count framework~\cite{cygan}. Finally, Section \ref{sec:result} describes the linear-fpt algorithm referenced in Theorem \ref{main}, which solves \textsc{$k$-Co-Path Set} on general graphs  in $O^*(1.588^k)$ by applying \twAlg to a set of ``reduced instances'' generated via kernelization and a branching procedure\footnote{The properties of our reduced instances guarantee we can find a tree decomposition in poly($k$) time.} \texttt{deg-branch}.

	\section{Preliminaries}\label{sec:prelim}
	Let $G(V,E)$ be the graph with vertex set $V$ and edge set $E$. Unless otherwise noted, we assume $|V| = n$;  we let $N(v)$ denote the set of neighbors of a vertex $v$, and let $\deg(v) = |N(v)|$. Given a graph $G(V,E)$ and $F \subseteq E$, we write $G[F]$ for the graph $G(V,F)$. 

Our \twAlg algorithm in Section~\ref{sec:cutcount} uses dynamic programming over a \textit{tree decomposition}, and its running time depends on the related measure of \textit{treewidth}~\cite{robertson}, which we denote $tw(G)$. To simplify the dynamic programming, we will use a variant of \emph{nice tree decompositions}~\cite{ntd, cygan} where each node in the tree has one of five specific types: leaf, introduce vertex, introduce edge, forget vertex, or join.  The ``introduce edge'' nodes are labelled with an edge $uv$ and have one child (with an identical bag); we require each edge in $E$ is introduced exactly once. Additionally, we enforce that the root node is of type ``forget vertex'' (and thus has an empty bag). A tree decomposition can be transformed into a nice decomposition of the same width in time linear in the size of the input graph~\cite{cygan}.  

When describing the dynamic programming portion of the algorithm we use Iverson's bracket notation:  if $p$ is a predicate we let $\Iver{p}$ be 1 if $p$ is true and 0 otherwise. We also use the shorthand $f[x\to y]$ to denote updating a function $f$ so that $f(x) = y$ and all other values are unchanged.  

Finally, we use fast subset convolution~\cite{convolution1} to reduce the complexity of handling join nodes in the nice tree decomposition (Section~\ref{sec:cutcount}). This technique maps functions of the vertices in a join bag to values in $\mathbb{Z}_p = \mathbb{Z}/p\mathbb{Z}$ (where $p$ is chosen based on the application). The key complexity result we rely on uses the $\mathbb{Z}_p$ product, which is defined below.  We write $\mathbb{Z}_p^B$ for the set of all vectors $t$ of length $|B|$ assigning a value $t(b) \in \mathbb{Z}_p$ to each element of $b \in B$. 

\begin{definition}[$\mathbb{Z}_p$ product]
Let $p \geq 2$ be a fixed integer and let B be a finite set. For $t_1, t_2, t \in \mathbb{Z}_p^B$ we say that $t_1 + t_2 = t$ if $t_1(b) + t_2(b) = t(b)$ (in $\mathbb{Z}_p$) for all $b \in B$. For a ring $R$ and functions $f, g: \mathbb{Z}_p^B \rightarrow R$, define the $\mathbb{Z}_p$ product, $*_x^p$ as
\begin{center}
$(f *_x^p g)(t) = \sum\limits_{t_1 + t_2 = t} f(t_1)g(t_2)$.
\end{center}
\end{definition}

\noindent Fast subset convolution guarantees that certain $\mathbb{Z}_p$ products can be computed quickly. 

\begin{lemma}[Cygan et al.~\cite{cygan}]
Let $R = \mathbb{Z}$ or $R = \mathbb{Z}_q$ for some constant q. The $\mathbb{Z}_4$ product of functions $f, g: \mathbb{Z}_4^B \rightarrow R$ can be computed in time and ring operations $4^{|B|}|B|^{O(1)}$.
\label{lem:Zp_prod}
\end{lemma}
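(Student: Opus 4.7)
My plan is to view the operation $*_x^4$ as convolution on the finite abelian group $G = \mathbb{Z}_4^B$ and reduce it to a tensor product of one-dimensional length-$4$ convolutions, which a standard FFT-style schedule then computes in the claimed time. Concretely, identify each $f \in R^{\mathbb{Z}_4^B}$ with the polynomial $\sum_{t \in \mathbb{Z}_4^B} f(t)\,\mathbf{x}^t$ in the quotient ring $S = R[x_b : b \in B]/(x_b^4 - 1)$, so that $f *_x^4 g$ is exactly the product $fg$ in $S$.

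To compute the product I would perform a Yates-style Discrete Fourier Transform over $G$, processing the variables one at a time: in each round, apply the length-$4$ DFT matrix to the axis indexed by $x_b$, which costs $O(1)$ ring operations per each of the $4^{|B|-1}$ axis-slices. Summing over $|B|$ rounds of forward DFT, $4^{|B|}$ pointwise multiplications, and $|B|$ rounds of inverse DFT gives $O(|B| \cdot 4^{|B|})$ total ring operations, matching the bound in the statement.

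The main obstacle is that the length-$4$ DFT uses a primitive fourth root of unity, which need not exist in $R$. I would sidestep this by working in the extension $R[\zeta]/(\zeta^2+1)$, which is a free $R$-module of rank $2$ so that each operation there costs $O(1)$ operations in $R$; correctness of the convolution guarantees the final output lies in the canonical copy of $R$ inside the extension. A related issue is the inverse-transform normalization by $4^{-|B|}$: over $R = \mathbb{Z}$ this is a guaranteed exact integer division, over $R = \mathbb{Z}_q$ with odd $q$ it is a unit inversion, and over $R = \mathbb{Z}_q$ with even $q$ I would either lift to $\mathbb{Z}_{q \cdot 4^{|B|}}$ and reduce at the end, or replace each Yates round by direct length-$4$ cyclic polynomial multiplication in $R[x_b]/(x_b^4-1)$ (which is division-free and still $O(1)$ per slice), preserving the $4^{|B|}|B|^{O(1)}$ bound in every case allowed by the hypothesis.
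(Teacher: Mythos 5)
The paper does not actually prove this statement: Lemma~\ref{lem:Zp_prod} is imported as a black box from Cygan et al.\ (where it rests on van~Rooij--style generalized fast subset convolution), so there is no in-paper argument to compare against. Your proof is a correct, self-contained reconstruction of essentially that standard argument: identifying $*_x^4$ with multiplication in the group algebra $R[x_b : b \in B]/(x_b^4-1)$, diagonalizing by a coordinate-wise (Yates) DFT in $O(4^{|B|}|B|)$ ring operations, adjoining $\zeta$ with $\zeta^2=-1$ (the orthogonality $\sum_{j=0}^{3}\zeta^{jd}=0$ for $d\in\{1,2,3\}$ holds in any such extension, so the unnormalized transform $V$ satisfies $\bar{V}V=4I$), and then fixing the normalization by exact integer division over $\mathbb{Z}$, unit inversion for odd $q$, or lifting to $\mathbb{Z}_{q\cdot 4^{|B|}}$ for even $q$ --- the last being the case this paper actually needs, since \twAlg counts modulo $2$. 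The lifting step is sound: $4^{|B|}y \bmod q4^{|B|}$ determines $y \bmod q$ uniquely, and each operation in the lifted ring costs $|B|^{O(1)}$ time for constant $q$.

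One caveat: your second fallback for even $q$, ``replace each Yates round by direct length-$4$ cyclic polynomial multiplication,'' does not preserve the claimed bound. Computing the multidimensional convolution recursively, one axis at a time, costs $r^{|B|}$ ring multiplications where $r$ is the bilinear rank of length-$4$ cyclic convolution; division-free over $\mathbb{Q}$ one has $r\geq 5$ (from $x^4-1=(x-1)(x+1)(x^2+1)$, contributing $1+1+3$ multiplications), and the naive version gives $16^{|B|}$. A rank-$4$ decomposition exists exactly when $R$ contains $i$ and $4$ is invertible, i.e., exactly when the DFT route already works, so this alternative buys nothing and as stated would break the $4^{|B|}|B|^{O(1)}$ bound. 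Since it is offered only as an ``or'' alongside the correct lifting argument, the proof as a whole stands; simply delete that alternative.
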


	\section{An $O^*(4^{tw(G)})$ Algorithm via Cut\&Count}\label{sec:cutcount}
	We start by giving an fpt algorithm for \textsc{Co-Path Set} parameterized by treewidth. Our primary tool is the the Cut\&Count framework, which enables $c^{tw}n^{O(1)}$ one-sided Monte Carlo algorithms for connectivity-type problems with constant probability of a false negative. Cut\&Count has previously been used to improve the best-known bounds for several well-studied problems, including \textsc{Connected Vertex Cover}, \textsc{Hamiltonian Cycle}, and \textsc{Feedback Vertex Set}~\cite{cygan}. Pilipczuk showed that an $O^*(c^{tw})$ algorithm for some constant $c$ can be designed with the Cut\&Count approach for \textsc{Co-Path Set} because the problem can be expressed in the specialized graph logic known as ECML+C~\cite{pilipczuk}. However, since our end goal is to improve on existing algorithms for \textsc{$k$-Co-Path Set} in general graphs using a bounded treewidth kernel, we need to develop a specialized dynamic programming algorithm with a small value of $c$. We show: \looseness-1

\begin{theorem}\label{cutcount}
There exists a one-sided fpt Monte Carlo algorithm \emph{\texttt{tw-copath}} deciding \textsc{$k$-Co-Path Set} for all $k$ in a graph $G$ in $O^*(4^{tw(G)})$ time with failure probability $\leq1/3$. 
\end{theorem}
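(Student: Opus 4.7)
The plan is to instantiate the Cut\&Count framework of Cygan et al.~\cite{cygan} for \textsc{$k$-Co-Path Set}. Let the candidate set $\mathcal{C}$ consist of all edge subsets $X\subseteq E$ with $|X|=|E|-k$ and $\Delta(G[X])\le 2$, and let $\mathcal{S}\subseteq \mathcal{C}$ consist of those $X$ which are additionally acyclic, so that $G[X]$ is a linear forest. Every $X\in\mathcal{C}$ decomposes into paths, isolated vertices and cycles, and if we pair each $X$ with a ``cycle-space marking'', i.e.\ a subgraph $y\subseteq X$ in which every vertex has even degree, then the number of such $y$ equals $2^{q(X)}$, where $q(X)$ is the number of cycle components of $X$. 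Summed and reduced modulo $2$, pairs arising from $X\in\mathcal{C}\setminus\mathcal{S}$ cancel and the total parity is $|\mathcal{S}|\bmod 2$.

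I would then apply the Isolation Lemma to a uniformly random weight function $w:E\to\{1,\dots,2|E|\}$, so that with probability at least $1/2$ the minimum-weight element of $\mathcal{S}$ is unique whenever $\mathcal{S}\ne\emptyset$. For each target weight $W\le 2|E|^2$ I would compute, modulo $2$,
\[
A_W \;=\; \sum_{X\in\mathcal{C},\, w(X)=W}\;\sum_{\substack{y\subseteq X\\ y\text{ even}}} 1,
\]
and output ``yes'' iff some $A_W$ is odd. A false negative only occurs when the Isolation Lemma fails, which happens with probability at most $1/2$; a constant number of independent repetitions drives this below $1/3$, giving the desired one-sided Monte Carlo guarantee, and absence of false positives follows from the cancellation identity above.

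The actual evaluation of each $A_W$ would be a dynamic program over a nice tree decomposition of $G$ (as set up in Section~\ref{sec:prelim}). For each bag $B$ I would store a table indexed by partial weight together with, for each vertex $v\in B$, a partial state recording the current $X$-degree (in $\{0,1,2\}$) and the current $y$-degree parity. Leaf, introduce-vertex, introduce-edge, and forget-vertex nodes update the table in polynomial time per entry: introduce-edge branches over the $\le 4$ possibilities of whether the new edge lies in $X$ and/or in $y$ (subject to $y\subseteq X$) and increments the two endpoints' partial states, while forget-vertex verifies $d_X\le 2$ and even $d_y$ before removing the vertex from the table.

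The main obstacle is the join node, where a naive combine is too expensive. The key step of the proof is to recast the join update as a $\mathbb{Z}_4$-product of the two children's tables: I would encode the per-vertex partial state so that the constraint ``the parent's partial $X$-degree and partial $y$-parity at each bag vertex equal the sums of the children's'' becomes pointwise addition in $\mathbb{Z}_4^B$, and then Lemma~\ref{lem:Zp_prod} performs the combine in $4^{|B|}|B|^{O(1)}$ time. Summing over the polynomially many weights $W$ and over all bags gives the $O^*(4^{tw(G)})$ bound claimed in Theorem~\ref{cutcount}; the most delicate part will be matching the per-vertex state space, together with the $y\subseteq X$ constraint, to the additive structure required by the $\mathbb{Z}_4$-product.
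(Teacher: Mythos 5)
Your proposal is correct in substance but replaces the paper's ``Cut'' ingredient with a genuinely different cancellation mechanism. The paper pairs each degree-$\le 2$ candidate with a marked consistent cut: a candidate with a cycle forces an unmarked component, hence an even number of consistent cuts, and the Isolation Lemma must then be applied over a universe containing \emph{two} copies of each edge (edge-in-solution and edge-as-marker), since a single linear forest admits many proper markings. You instead pair each candidate $X$ with an element $y$ of its cycle space; since a max-degree-$2$ graph has exactly $2^{q(X)}$ even subgraphs where $q(X)$ is its number of cycle components, cyclic candidates cancel mod $2$ and acyclic ones contribute exactly once (via $y=\emptyset$). This buys you a cleaner setup: no markers, no bookkeeping of non-isolated vertices to certify acyclicity via marker counts, and an Isolation Lemma applied directly to $\mathcal{S}\subseteq 2^E$ with a single weight per edge. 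The one step you flag but leave open --- fitting the per-vertex state $(d_X, d_y \bmod 2)$ with $y\subseteq X$ into the additive structure of the $\mathbb{Z}_4$-product --- does go through, and it is worth noting why: the nominal state space has five live-looking values $(0,0),(1,0),(1,1),(2,0),(2,1)$, but once $d_X(v)=2$ no further edges at $v$ can enter $X$, hence none can enter $y$, so the state $(2,1)$ can never be repaired to even $y$-parity and may be discarded as dead. That leaves exactly four states, which admit the same encoding the paper uses for its cut-side states ($\phi = 0,1,3,2$ paired with the degree function $\rho = 0,1,1,2$ to kill the spurious wrap-around $1+3\equiv 0$), so Lemma~\ref{lem:Zp_prod} applies verbatim and the $O^*(4^{tw(G)})$ bound follows. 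Your amplification from failure probability $1/2$ to below $1/3$ by independent repetition is fine and preserves one-sidedness; the paper instead takes $N=3|U|$ to get $2/3$ in a single run.
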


The Cut\&Count technique has two main ingredients: an algebraic approach to counting which uses arithmetic in $\mathbb{Z}_2$ (enabling faster algorithms) alongside a guarantee that undesirable objects are seen an even number of times (so a non-zero result implies a desired solution has been seen), and the idea of defining the problem's connectivity requirement through consistent cuts.  In this context, a {\em consistent cut} is a partitioning $(V_1, V_2)$ of the vertices of a graph into two sets such that no edge $uv$ has $u \in V_1$ and $v \in V_2$ {and} all vertices of degree $0$ are in $V_1$ . Since each connected component must lie completely on one side of any consistent cut, we see that a graph $G$ has exactly $2^{cc(G) - n_I(G)}$ such cuts, where $cc(G)$ is the number of connected components and $n_I(G)$ is the number of isolated vertices. 
In order to utilize parity with the number of consistent cuts, we introduce \textit{markers}, which create even numbers of consistent cuts for graphs that are not collections of disjoint paths. 
Our counting algorithm \twAlg, which computes the parity of the size of the collection of subgraphs with consistent cuts which adhere to specific properties pertaining to \textsc{Co-Path Set}, employs dynamic programming over a nice tree decomposition. 
We further use weights and the Isolation Lemma to bound the probability of a false negative arising from multiple valid markings of a solution. We use fast subset convolution~\cite{convolution1} to reduce the complexity required for handling join bags in the dynamic programming.  In the remainder of this section, we present the specifics for applying these techniques to solve \textsc{Co-Path Set}. \looseness-1
 
\subsection{Cutting}

We first provide formal definitions of markers and marked consistent cuts, which we use to ensure that sets of disjoint paths are counted exactly once in some entry of our dynamic programming table. 

 \begin{definition} A triple $(V_1, V_2, M)$ is a \textbf{marked consistent cut} of a graph $G$ if $(V_1, V_2)$ is a consistent cut and  $M \subseteq E(G[V_1])$. We refer to the edges in $M$ as the \textbf{markers}.  A marker set is \textbf{proper} if it contains at least one edge in each non-isolate connected component of $G$.
 \label{def:mcc}
\end{definition}

Note that if a marker set is proper, all vertices are on the $V_1$ side of the cut. This is because by the definition of a consistent cut, all isolates are on the $V_1$ side, and if every connected component contains a marker then all connected components must fall entirely on the $V_1$ side as well. Therefore for any proper marker set there exists exactly one consistent cut, while all marker sets which are not proper will be paired with an even number of consistent cuts because unmarked components may lie in $V_1$ \emph{or} $V_2$.
We use proper marker sets to distinguish desired subgraphs by assigning markers in such a way that when we prune the dynamic programming table for solutions (as described later in the section), the only subgraphs we consider which may have a proper marker set are collections of disjoint paths. We know because the marker set is proper that the subgraph has a unique consistent cut, and thus these collections of disjoint paths will only be counted once in some entry of the dynamic programming table, while all other subgraphs will be counted an even number of times. Note that we are not claiming that all collections of disjoint paths will have proper marker sets.

We refer to the complement of a co-path set (the edges in the disjoint paths) as a \textit{cc-solution}, and call it a \textit{marked-cc-solution} when paired with a proper marker set of size exactly equal to its number of non-isolate connected components. While cc-solutions can be viewed as solutions due to their complementary nature, being marked is crucial in our counting algorithm and thus subgraphs which are marked-cc-solutions are what correspond to solutions in the dynamic programming table. \looseness-1

We now describe our use of the Isolation Lemma, which guarantees we are able to use parity to distinguish solutions. Let $f(X)$ denote $\sum_{x \in X} f(x)$. 

\begin{IL}[\hspace*{-3pt}\cite{isolation}]  Let $\mathcal{F} \subseteq 2^U$ be a non-empty set family over universe $U$.  A function $\omega\colon U \to \mathbb{Z}$ is said to \emph{isolate} $\mathcal{F}$ if there is a unique $S \in \mathcal{F}$ with $\omega(S) =\min_{F \in \mathcal{F}} \omega(F)$.  Assign weights $\omega \colon U \to \{1, 2, ... , N\}$ uniformly at random. Then the probability that $\omega$ isolates $\mathcal{F}$ is at least $1 - |U| / N$.
\label{iso_lemma}
\end{IL}

Intuitively, if  $\mathcal{F}$ is the set of solutions (or complements of solutions) to an instance of \textsc{Co-Path Set} and $|\mathcal{F}|$ is even, then \twAlg would return a false negative. This is because while each solution is counted an odd number of times in \twAlg, because there are an even number of solutions the total count of solutions is even, making the combined count of solutions and non-solutions even and the algorithm would incorrectly determine a solution does not exist (a false negative). The Isolation Lemma allows us to partition $\mathcal{F}$ based on the weight of each solution (as assigned by $\omega$), and guarantees at least one of the partition's blocks has odd size with constant probability. 
We let $U$ contain two copies of every edge $e \in E$: one representing $e$ as a marker and one as an edge in the cc-solution. Then $2^U$ denotes all pairs of edge subsets (potential marked-cc-solutions), and we set $N = 3 |U| = 6 E$ (selected to achieve success probability in Theorem \ref{main}).
Each copy of an edge is assigned a weight in $[1, N]$ uniformly at random by $\omega$ and the probability of finding an isolating $\omega$ is thus $2/3$. We denote the values assigned by $\omega$ to the set of marker copies by $\omega_M$, and likewise to the set of edge-in-cc-solution copies by $\omega_E$.
 \looseness-1

\subsection{Counting}

A marked-cc-solution $C$ of a graph $G$ corresponds to a co-path set of size $k$ when the number of edges and markers in $C$ match specific values which depend on $k$ and $|E(G)|$. These values are easily deduced because we know the deletion of a co-path set solution of size $k$ will leave $|E(G)| - k$ edges in a cc-solution. Furthermore, because a forest has $n - m$ connected components, the number of markers in $C$ needs to be at most $|V(C)| - |E(G)| + k$. All isolates from a forest can be removed and the resulting graph is still a forest, and thus the actual number of markers necessary in $C$ is $|V(C)| - n_I(C) - |E(G)| + k$.

We now describe a dynamic programming (DP) algorithm over a nice tree decomposition which returns mod 2 the number of appropriately sized marked-cc-solutions in the root's subtree (for a fixed $k$).  Since no-instances have no appropriately sized marked-cc-solutions, and yes-instances have at least one, odd parity for the number of marked-cc-solutions of size corresponding to $k$ implies a solution to the \textsc{$k$-Co-Path Set} instance must exist.  

During the DP algorithm we actually count (for all values $(m, e)$) the number of \emph{cc-candidates}, which are subgraphs $G' \subseteq G$ with maximum degree 2, exactly $e$ edges, and a marked consistent cut with $m$ markers. The following lemma justifies counting cc-candidates in place of marked-cc-solutions.
\begin{lemma}
The parity of the number of marked-cc-solutions in $G$ with $e$ edges and weight $w$ is the same as the parity of the number of cc-candidates $G' \subseteq G$ with $e$ edges, $|V(G')| - e - n_I(G')$ markers, and weight $w$. 
\label{lem:equality}
\end{lemma}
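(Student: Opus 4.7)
The plan is to establish a parity-preserving correspondence by fixing the underlying edge set and marker set and then analyzing how many consistent cuts can be paired with them. I would first re-expand the objects: a cc-candidate is a tuple $(G', V_1, V_2, M)$, while a marked-cc-solution is a pair $(C, M)$ where $C$ is a union of disjoint paths and $M$ is a proper marker set of size equal to the number of non-isolated components of $C$ (paired with the unique forced cut that places every vertex in $V_1$). Since the weight $\omega_E(E(G')) + \omega_M(M)$ depends only on the edges of $G'$ and on $M$, not on the cut, fixing the weight filters both collections through the same function of $(E(G'), M)$.

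Next I would count, for a fixed pair $(G', M)$ with $|E(G')| = e$, how many consistent cuts $(V_1, V_2)$ make $(V_1, V_2, M)$ a marked consistent cut. Each edge of $M$ forces both of its endpoints into $V_1$, and consistency forces an entire connected component containing a marker (or an isolated vertex) into $V_1$. If every non-isolated component of $G'$ contains at least one marker (i.e., $M$ is proper) there is exactly one such cut; otherwise, if $r \ge 1$ non-isolated components are marker-free, each may independently land in $V_1$ or $V_2$, giving $2^r$ cuts, an even number. Hence
\[
\#\{\text{cc-candidates}\} \equiv \#\{(G',M) : M \text{ proper}, |E(G')|=e, |M|=|V(G')|-e-n_I(G'), \text{ weight }w\} \pmod 2.
\]

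The remaining step is to identify the right-hand side with marked-cc-solutions. Since $G'$ has maximum degree at most $2$, I would decompose it as $n_I(G')$ isolated vertices, $p$ non-isolated path components, and $q$ cycle components. A direct edge count gives $e = (|V(G')| - n_I(G') - p)$, so the prescribed marker count satisfies $|M| = |V(G')| - e - n_I(G') = p$. Properness demands at least $p+q$ markers, so $p \ge p+q$, forcing $q = 0$. Thus $G'$ is a disjoint union of paths, each containing exactly one marker, which is precisely the definition of a marked-cc-solution. This gives a bijection on the proper side and completes the parity argument.

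The main obstacle is really bookkeeping: getting the identity $|V(G')| - e - n_I(G') = p$ correct for arbitrary max-degree-$2$ graphs and then using properness to rule out the presence of any cycle component. Once that algebraic equality is in hand, the cancellation of improper markings via the even factor $2^r$ is the standard Cut\&Count move, and the weights pose no additional difficulty because they never involve the cut.
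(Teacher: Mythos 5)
Your proposal is correct and follows essentially the same argument as the paper: both rest on the observation that a pair $(G',M)$ admits $2^r$ consistent cuts where $r$ is the number of unmarked non-isolated components (so improper markings cancel mod $2$), and that the prescribed marker count $|V(G')|-e-n_I(G')$ can only equal the number of non-isolated components when $G'$ is acyclic, which forces any surviving configuration to be a collection of disjoint paths with exactly one marker per component. The paper organizes the case analysis by first splitting on whether $G'$ contains a cycle, whereas you derive acyclicity from the properness constraint, but this is only a cosmetic reordering of the same ideas.
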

\begin{proof}
Consider a subgraph $G' \subseteq G$ with maximum degree 2 and $e$ edges. Let $M'$ be a marking of $G'$ such that $\omega_E(E(G')) + \omega_M(M') = w$. Assume first that $G'$ is a collection of paths. We know that $G'$ has $|V(G')| - e - n_I(G')$ non-isolate connected components. If $M'$ is a proper marker set of $G'$, then $|M'| = |V(G')| - e - n_I(G')$ and $(G', M')$ has exactly one consistent cut. Therefore $(G', M')$ contributes one to both the number of marked-cc-solutions and the number of cc-candidates, respectively.  
   
If otherwise $M'$ is not a proper marker set, then $(G', M')$ contains an unmarked connected component and has an even number of consistent cuts, and therefore contributes an even number to the count of cc-candidates and zero to the number of marked-cc-solutions. Finally, if $G'$ contains at least one cycle then $cc(G') >|V(G')| - e - n_I(G')$. Therefore at least one connected component does not contain a marker, and the number of consistent cuts is even, so the contribution to the count of cc-candidates is again even and the contribution to the count of marked-cc-solutions is zero. We conclude that the parity of the number of marked-cc-solutions and the parity of the number of cc-candidates is the same.
\end{proof}

Our dynamic programming algorithm is a bottom-up approach over a nice tree decomposition. We build cc-candidates for all values of $m$ and $e$ (encoding the option to add/not add edges and select/not select edges as markers), and keep track of various parameters ensuring that when pruning the DP table we only consider cc-candidates which could be valid solutions to the \textsc{$k$-Co-Path Set} instance.
We use the number of edges to ensure our solution is of the correct size, and the number of markers and non-isolate vertices to determine when a subgraph is acyclic. The weight parameter allows us to distinguish between solutions and decreases the likelihood of a false negative occurring via the Isolation Lemma. Finally, we need a parameter that encodes the degree information required to properly combine cc-candidates as we iterate up the tree.
\smallskip
\begin{table}[thb]
\centering
\begin{tabular}{ p{1em} l l   p{2em} r p{1em}}
\toprule
  \multicolumn{2}{c}{Variable\hspace*{1.5em}} & Parameter &  \multicolumn{3}{c}{Maximum value} \\ \midrule
  & $a$ &  $\#$ of non-isolated vertices && $ n$ & \\
  & $e$ & $\#$ of edges && $n^2$ &\\
  & $m$ & $\#$ of markers && $n^2$ &\\
  & $w$ & weight of edges and markers && $4n^4$ &\\
  \bottomrule
\end{tabular}
\vspace{0.5em}
\caption{Dynamic programming table parameters and upper bounds. }
 \label{tab:values}
\end{table}
We call this parameter a \emph{degree-function} and define on the vertices $V$ of a bag as $f: V \rightarrow \Sigma = \{0, 1_1, 1_2, 2\}$, where $f(v)$ corresponds to $v$'s degree in the associated cc-candidates of the table entry  --- for vertices of degree $1$, their value $1_j$ denotes which side of the partition $(V_1,V_2)$ they are on. Vertices with degree $0$ are on the $V_1$ side of the cut by definition so we need not keep track of their side of the cut. Similarly degree $2$ vertices cannot have additional incident edges, thus the side of the cut they fall on will not matter for selecting markers. 
In summary, we have table entries $A_x(a, e, m, w, s)$ counting the number of cc-candidates at bag $x$ with $a$ non-isolated vertices, $e$ edges, $m$ markers, weight $w$, and degree-function $s$.

In the following description of the dynamic programming algorithm over a nice tree decomposition $T$, we let  $z_1, z_2$ denote the children of a join node; otherwise, the unique child is denoted $y$.  \\

\medskip
\noindent\textbf{Leaf}:
$$A_x(0, 0, 0, 0, \emptyset) = 1\text{; }A_x(a, e, m, w, s) = 0 \text{ for all other inputs.}$$

\noindent\textbf{Introduce vertex $v$}: 
$$A_x(a, e, m, w, s) = \Iver{s(v)=0} A_y(a, e, m, w, s).$$ 

\noindent\textbf{Introduce edge $uv$}: 
\begin{gather*}
A_x(a, e, m, w, s) = A_y(a, e, m, w, s) + 
\sum_{\mathclap{\substack{\alpha_t \in subs(s(t)) \\ t \in \{u,v\}}}}  \; \Iver{\phi_2(\alpha_u,\alpha_v)}A_y(a', e - 1, m, w', s') \\
+\; \sum_{\mathclap{\substack{\alpha_t \in subs(s(t)) \\ t \in \{u,v\}}}} \; \Iver{\phi_1(\alpha_u,\alpha_v)}\Big(A_y(a', e - 1, m, w', s') + A_y(a', e - 1, m - 1, w'', s')\Big), 
\end{gather*}

\noindent where $\phi_j(\alpha_u,\alpha_v) =  (\alpha_u = 1_j \lor s(u) = 1_j) \land (\alpha_v = 1_j \lor s(v) = 1_j)$, $a' = a - (| \{1_1, 1_2\} \cap \{s(u), s(v)\}|)$, $w' = w - \omega_E(uv)$,  $w'' = w - \omega_E(uv) - \omega_M(uv)$, $s' = s[u \to \alpha_u, v \to \alpha_v]$, and the $subs$ function returns all the values the degree-function in child node $y$ could have assigned to vertices $u$ and $v$ based on current degree-function $s$ (summarized below).

\begin{center}
\setlength{\tabcolsep}{0.5em}%
\noindent \begin{tabular}{ l  c  c  c  c }%
\toprule%
  $s(v)$ & 0 & $1_1$ & $1_2$ & 2\\ \midrule
  $subs(s(v))$ & $\emptyset$ & 0 & 0 & $\{1_1, 1_2\}$ \\ 
  \bottomrule
\end{tabular}
\end{center}

We now argue this formula's correctness. The term $A_y(a, e, m, w, s)$ handles the case when $uv$ is excluded from the cc-solution. We handle the case when $uv$ is added to the cc-solution by iterating over all possible \emph{subs} values for each endpoint, only considering counts in child $y$'s entries
where $u$ and $v$ have the appropriate \emph{subs} values (preventing us from ever having a vertex with degree greater than 2). 
Note that we use the $\phi_j$ function to guarantee that if $s$ labels $u$ or $v$ as an isolate, we do not use the introduced edge. We have a summation for both possible $j$ values in order to consider $uv$ falling on either side of the cut. 
The formulation of $a'$ assures that each endpoint of degree $1$ is now included in the count of non-isolates (i.e. when $u$ and/or $v$ had degree $0$ in $y$).
We utilize the marker weight of $uv$ to distinguish when we choose it as a marker (only if on $V_1$ side of cut), and increment $m$ accordingly.
In either case, we update $w$ appropriately (with $w'$ if no marker, $w''$ if marker introduced).  \looseness-1

\noindent\textbf{Forget vertex $h$:} \\
$$A_x(a, e, m, w, s) = \sum_{\alpha \in \{0, 1_1, 1_2, 2\}} A_y(a, e, m, w, s[h \to \alpha]).$$

As a forgotten vertex can have degree 0, 1 or 2 in a cc-candidate, we must consider all possible values that $s$ assigns to $h$ in child bag $y$. Note that cc-candidates in which $h$ is both not an isolate and not a member of a connected component that contains a marker will cancel mod $2$, as $h$ can be on either side of the cut and all parameters will be identical.

\noindent\textbf{Join:} \\
We compute $A_x$ from $A_{z_1}$ and $A_{z_2}$ via fast subset convolution~\cite{convolution1} taking care to only combine table entries whose degree-functions are \emph{compatible}. 
\begin{definition}
At a join node $x$ with children $z_1$ and $z_2$, the degree-functions $s_1$ from $A_{z_1}, s_2$ from $A_{z_2}$, and $s$ from $A_x$ are \textbf{compatible} if one of the following holds for every vertex $v$ in $x$: (i) $s_i(v) = 0 \text{ and } s_l(v) = s(v), i \neq l$  or (ii) $s_1(v) = s_2(v) = 1_j \text{ and } s(v) = 2$ for $i, j, l \in [1, 2]$. 
\end{definition}

In order to apply Lemma~\ref{lem:Zp_prod}, we let $B$ be the bag at $x$, and transform the values assigned by the degree function $s$ to values in $\mathbb{Z}_4$. Let $\phi \colon \{0, 1_1, 1_2, 2\} \to \mathbb{Z}_4$ and $\rho \colon \{0, 1_1, 1_2, 2\} \to \mathbb{Z}$ be defined as in the table below, extending to vectors by component-wise application.

\begin{center}
\setlength{\tabcolsep}{0.5em}%
\noindent \begin{tabular}{ p{10pt}  p{10pt} p{10pt}  p{10pt}  p{10pt} } 
   & 0 & $1_1$ & $1_2$ & 2 \\ \midrule
   $\phi$ & 0 & 1 & 3 & 2 \\ 
   $\rho$ & 0 & 1 & 1 & 2 
\end{tabular}
\end{center}

\noindent We use $\phi$ to apply Lemma~\ref{lem:Zp_prod}, while the function $\rho$ (which corresponds to a vertex's degree) is used in tandem to ensure the compatibility requirements are met: if $\phi(s_1) + \phi(s_2) = \phi(s)$, then necessarily $\rho(s_1) + \rho(s_2) \geq \rho(s)$. From the above table it is easy to verify that $\phi(s_1) + \phi(s_2) = \phi(s)$ and $\rho(s_1) + \rho(s_2) = \rho(s)$ together imply that $s_1, s_2 \text{ and } s$ are compatible. We sum over both functions when computing values for join nodes, to make sure that solutions from the children are combined only when there is compatibility. 

 Assign $t_1 = \phi(s_1)$, $t_2 = \phi(s_2)$, and $t = \phi(s)$ in accordance with Lemma \ref{lem:Zp_prod}.   Let $\rho(s) = \sum_{v \in B} \rho(s(v))$; that is $\rho(s)$ is the sum of the degrees of all the vertices in the join node, as assigned by degree-function $s$. By defining functions $f$ and $g$ as follows: 
\begin{align*}
f^{\langle d, a, e, m, w\rangle}(\phi(s)) &= \Iver{\rho(s) = d}A_{z_1}(a, e, m, w, s), \\ 
g^{\langle d,a, e, m, w\rangle}(\phi(s)) &= \Iver{\rho(s) = d}A_{z_2}(a, e, m, w, s),
\end{align*}

\noindent and writing $\vec{r_i}$ for the vector $\langle d_i,a_i,e_i,m_i,w_i\rangle$, we can now compute
\begin{equation*}
 A_x(a, e, m, w, s) = \sum_{\vec{r_1}+\vec{r_2} = \langle \rho(s),a',e,m,w\rangle} (f^{\vec{r_1}} *_x^4 g^{\vec{r_2}})(\phi(s)) 
\end{equation*}
where $a' = a + |s_1^{-1}\{1_1, 1_2\} \cap s_2^{-1}\{1_1, 1_2\}|$. We point out that 
\begin{equation*}
\sum_{\vec{r_1}+\vec{r_2} = \langle \rho(s),a',e,m,w\rangle} (f^{\vec{r_1}} *_x^4 g^{\vec{r_2}})(\phi(s))  = 1 
\end{equation*}
only if both $\phi(s_1) + \phi(s_2) = \phi(s)$ and $\rho(s_1) + \rho(s_2) = \rho(s)$; that is, exactly when $s_1, s_2 \text{ and } s$ are compatible.
  \\

\medskip 

We conclude this section by describing how we search the DP table for marked-cc-solutions at the root node $r$.
By Lemma~\ref{lem:equality}, the parity of the number of marked-cc-solutions with $|E| - k$ edges and weight $w$ is the same as the parity of the number of cc-candidates $G'$ with $|E| - k$ edges, $|V(G')| - (|E| - k) - n_I(G')$ markers and weight $w$. These candidates are recorded in the table entries $A_r(a, |E|-k, a-|E|+k, w, \emptyset)$, where $a$ is the number of non-isolates. Therefore, if there exists some $a$ and $w$ so that $A_r(a, |E|-k, a-|E|+k, w, \emptyset) = 1$, then we have a yes-instance of \textsc{$k$-Co-Path Set}. Note that the degree-function is $\emptyset$ in this entry because there are no vertices contained in the root node by definition.
 
By Lemma \ref{lem:Zp_prod}, the time complexity of \twAlg for a join node $B$ is  $O^*(4^{|B|})$, which is $O^*(4^{tw})$. 
Note that for the other four types of bags, as we only consider one instance of $s$ per table entry, the complexity for each is $O^*(4^{tw})$. We point out that the size of the table is polynomial in $n$ because there are a linear number of bags and a polynomial number of entries (combinations of parameters) for each bag.
Since the nice tree decomposition has size linear in $n$, the bottom-up dynamic programming runs in total time $O^*(4^{tw})$. This complexity bound combined with the correctness of \twAlg discussed above proves Theorem \ref{cutcount}.

	\section{Achieving $O^*(1.588^k)$ in General Graphs}\label{sec:result}
	In order to use \twAlg to solve \textsc{k-CoPath Set} in graphs with unbounded treewidth, we combine kernelization and a branching procedure to generate a set of \emph{reduced instances} -- bounded treewidth subgraphs of the input graph $G$.  Specifically, we begin by constructing a kernel of size at most $6k$ as described in~\cite{feng2}. Our reduced instances are bounded degree subgraphs of the kernel given by a branching technique. We prove that (1) at least one reduced instance is an equivalent instance; (2) we can bound the number of reduced instances; and (3) each reduced instance has bounded treewidth. Finally, we analyze the overall computational complexity of this process.

\subsection{Kernelization and Branching}

We start by describing our branching procedure \texttt{deg-branch} (Algorithm \ref{alg:deg-branch}), which uses a degree-bounding technique similar to that of Zhang et al.~\cite{zhang}. Our implementation takes an instance $(G,k)$ of \textsc{Co-Path Set} and two non-negative integers $\ell$ and $D$, and returns a set of reduced instances $\{(G_i, k-\ell)\}$ so that (1) each $G_i$ is a subgraph of $G$ with exactly $|E|-\ell$ edges and maximum degree at most $D$; and (2) at least one $(G_i, k-\ell)$ is an equivalent instance to $(G,k)$. The size of the output (and hence the running time) of \texttt{deg-branch} depends on both input parameters $\ell$ and $D$. We will select $D$ to achieve 
the desired complexity in $\mainAlg$ in Section~\ref{sec:copathalg}. We also make use of a budget parameter $b$, which keeps track of how many more edges can be removed per the constraints of $\ell$ ($b$ is initially set to $\ell$). 

Our branching procedure leverages the observation that if a co-path set $S$ exists, then every vertex has at most two incident edges not in $S$.  Specifically, for every vertex of degree greater than $D$, we branch on pairs of incident edges which could remain after removing a valid co-path set (calling each pair a \emph{candidate}), creating a search tree of subgraphs.  

 \def\kRec{\ell}
\begin{algorithm}[htb]
\SetKwFunction{bnddeg}{deg-branch}
	\DontPrintSemicolon
	\SetKwProg{bnddegprog}{Algorithm}{}{}
	\bnddegprog{  \bnddeg{$G,k,\kRec,D,b$}  }{
		Let $v$ be a vertex of maximum degree in $G$\;
		\uIf{$deg(v) \geq D+1$ \emph{and}~$b \geq D-1$}{
			Select vertices $u_1, \ldots u_{D+1}$ uniformly at random from $N(v)$\;
			$R = \emptyset$, $E_v = \{\{v,u_i\}\}$\;
			\For{$e_1,e_2 \in E_v, e_1 \neq e_2$}{
				$E'_v = E_v \setminus \{ e_1, e_2 \}$\;
				$R = R \cup{}$ \bnddeg{$G \setminus E'_v, k, \kRec, D, b-(D-1)$}\;
				}
				\Return $R$\;
				}
		\lElseIf{$b = 0$ \emph{{and}}~$deg(v) \leq D$}{\Return $\{ (G, k-\ell) \}$}
		\lElse(\tcp*[f]{Discard~$G$}){\Return $\emptyset$}}
\caption{Generating reduced instances}
\label{alg:deg-branch}
\end{algorithm}

Algorithm \ref{alg:deg-branch} returns the set of reduced instances, the size of which is at most the number of leaves in the search tree of the branching process (inequality can result from the algorithm discarding branches in which the number of edits necessary to branch on a vertex exceeds the number of allowed deletions remaining). We now give an upper bound on the size of this set.  

\begin{lemma}\label{lemma:search-tree}
	Let~$T$ be a search tree formed by \emph{\texttt{deg-branch}}$(G, \kRec, D, k, b)$.  The number of leaves of $T$ is at most ${D+1 \choose 2}^{\kRec/(D-1)}$.
	\end{lemma}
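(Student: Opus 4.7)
The plan is to argue by induction on the recursion depth, exploiting the fact that \texttt{deg-branch} is rigidly structured: each internal node of the search tree performs one branching step that (i) spawns exactly $\binom{D+1}{2}$ children and (ii) decreases the budget $b$ by exactly $D-1$. Since the algorithm only branches when $b \geq D-1$ and starts with $b = \ell$, the depth of branching is bounded by $\lfloor \ell/(D-1) \rfloor$, which multiplied against the branching factor yields the claim.

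More concretely, I would prove by induction on $b$ the following slightly stronger statement: for every $0 \leq b \leq \ell$, any invocation \texttt{deg-branch}$(G,k,\ell,D,b)$ produces a search tree with at most $\binom{D+1}{2}^{b/(D-1)}$ leaves. The base case covers both terminating branches of the algorithm: if $b < D-1$ or $\deg(v) \leq D$, then the call either returns $\{(G,k-\ell)\}$ or $\emptyset$, i.e. a tree with a single leaf (or none), and $\binom{D+1}{2}^{b/(D-1)} \geq 1$ so the bound holds. For the inductive step, when $\deg(v) \geq D+1$ and $b \geq D-1$, the algorithm issues exactly $\binom{D+1}{2}$ recursive calls, each with budget $b - (D-1)$. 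By the inductive hypothesis, each subcall contributes at most $\binom{D+1}{2}^{(b-(D-1))/(D-1)}$ leaves, so the total is at most
\[
\binom{D+1}{2} \cdot \binom{D+1}{2}^{(b-(D-1))/(D-1)} = \binom{D+1}{2}^{b/(D-1)},
\]
completing the induction. Setting $b = \ell$ at the initial call gives the lemma.

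The argument is essentially bookkeeping, and I do not expect a genuine obstacle. The only subtlety is making sure the exponent $\ell/(D-1)$ is interpreted correctly when $D-1$ does not divide $\ell$: in that case the recursion halts as soon as $b < D-1$, so the realized depth is $\lfloor \ell/(D-1)\rfloor$ and the bound $\binom{D+1}{2}^{\ell/(D-1)}$ still dominates (since the base $\binom{D+1}{2} \geq 1$). A secondary point to verify in passing is that the algorithm's structure really does decrement $b$ by exactly $D-1$ per branch, which follows from reading off $E_v' = E_v \setminus \{e_1,e_2\}$ having size $(D+1) - 2 = D-1$.
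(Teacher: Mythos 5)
Your proposal is correct and is essentially the paper's argument, just recast as an explicit induction on the budget $b$: both rest on the same two facts, namely that each branching node has exactly $\binom{D+1}{2}$ children and that $b$ decreases by exactly $D-1$ per level starting from $\ell$, giving depth at most $\ell/(D-1)$. The paper simply states the bound as (branching factor)$^{\text{depth}}$ without the inductive packaging, so no substantive difference exists.
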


\begin{proof}[Proof of Lemma~\ref{lemma:search-tree}]
The number of children of each interior node of $T$ is ${D+1 \choose 2}$, resulting in at most ${D+1 \choose 2}^{depth(T)}$ leaves. The depth of $T$ is limited by the second condition of the \texttt{if} on line 3 of Algorithm \ref{alg:deg-branch}. For each recursive call, $b$ is decremented by $(D - 1)$, until $b \leq D-1$. As $b$ is initally set to $\kRec$, this implies $depth(T) \leq \kRec / (D-1)$, proving the claim.
\end{proof}

Finally, we argue that at least one member of the set of reduced instances returned by \texttt{deg-branch} is equivalent to the original. Consider a solution $F$ to \textsc{$k$-Co-Path Set} in the original instance $(G,k)$. Every vertex has at most two incident edges in $G[E\backslash F]$, and since all candidates are considered at every high-degree vertex, at least one branch correctly keeps all of these edges.

\subsection{Treewidth of Reduced Instances}

Our algorithm \texttt{deg-branch} produces reduced instances with bounded degree; in order to bound their treewidth, we make use of the following result, which originated from Lemma 1 in~\cite{fomin} and was extended in~\cite{gaspers}.
\begin{lemma}
For $\epsilon > 0$, there exists $~ n_\epsilon \in \mathbb{Z}^+$ s.t. for every graph G with $n > n_\epsilon$ vertices,
\begin{equation*}
tw(G) \leq  \left(\sum_{i=3}^{17}c_in_i\right) + n_{\geq18} + \epsilon n,
\end{equation*}
where $n_i$ is the number of vertices of degree i in $G$ for i $\in \{3, \ldots, 17\}$, $n_{\geq18}$ is the number of vertices of degree at least 18, and $c_i$ is given in Table \ref{tab:coeffs}. Moreover, a tree decomposition of the corresponding width can be constructed in polynomial time in $n$. 
\label{fomin}
\end{lemma}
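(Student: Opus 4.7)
The statement is essentially a restatement of results established in~\cite{fomin} and refined in~\cite{gaspers}, so the plan is to adapt their potential-function argument. Define the potential $\mu(G) = \sum_{i=3}^{17} c_i n_i + n_{\geq 18}$ and construct, recursively, a tree decomposition of width at most $\mu(G) + \epsilon n$ in polynomial time.

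First I would handle vertices of degree at most $2$ via a preprocessing pass: paths and isolated vertices contribute at most $1$ to treewidth, so they can be peeled off without affecting the target bound once $n > n_\epsilon$. Next, I would apply a greedy elimination scheme on the remaining graph: at each step select a vertex $v$ of minimum degree $d$, form a bag containing $v$ together with a carefully chosen separator around $v$, and recurse on the resulting smaller subproblems. The key invariant is the local recurrence that eliminating a degree-$d$ vertex decreases $\mu$ by at least the width contribution charged to the current bag, up to an additive error that telescopes into the $\epsilon n$ slack. Assembling the bags along the recursion tree yields a valid tree decomposition, and the polynomial runtime follows because each recursive step takes polynomial time and the recursion depth is linear in $n$.

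The main obstacle is establishing the local recurrence $\mu(G') \leq \mu(G) - c_d + O(1)$ for every degree $d \in \{3, \ldots, 17\}$. This requires a detailed case analysis of how removing a degree-$d$ vertex changes the degree profile of its neighbors, combined with a simultaneous solution of the resulting system of recurrences that pins down the coefficients $c_i$. This fixed-point-style calculation is the combinatorial heart of the argument in~\cite{fomin, gaspers}; the entries in Table~\ref{tab:coeffs} are precisely its solution. In our downstream application, only the prefix $c_3, \ldots, c_D$ is exercised (with $D$ to be chosen later in Section~\ref{sec:copathalg}), which simplifies the required verification and, more importantly, makes the resulting treewidth bound on reduced instances genuinely useful for the overall analysis.
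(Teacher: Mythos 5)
The paper offers no proof of this lemma at all: it is imported verbatim from Lemma~1 of~\cite{fomin} as extended in~\cite{gaspers} (Table~6.1), so simply citing those works is the ``intended'' argument, and your opening sentence to that effect is fine. The problem is with the sketch you then give of how the proof in those references goes, which on its own terms has a genuine gap. A purely local elimination/charging scheme of the form $\mu(G') \leq \mu(G) - 1$ per eliminated vertex cannot produce the coefficient $c_3 = 0.1667 \approx 1/6$: eliminating a degree-$3$ vertex adds $1$ to the width of the bag being formed, but the potential only drops by $c_3$ plus the (small) decrease in its three neighbors' coefficients, which is far less than $1$. The actual argument in~\cite{fomin,gaspers} is an induction on the number of vertices of degree at least $4$: one repeatedly removes a vertex $v$ of \emph{maximum} degree $d \geq 4$ (adding it to every bag of a decomposition of $G - v$, so $tw(G) \leq tw(G-v)+1$), with the constants $c_4,\dots,c_{17}$ chosen so that the potential drops by at least $1$ in the worst case; the base case of maximum degree $3$ is \emph{not} handled locally but by invoking the Fomin--H{\o}ie theorem that cubic graphs on $n > n_\epsilon$ vertices have pathwidth at most $(1/6+\epsilon)n$, which in turn rests on the Monien--Preis bisection-width bound. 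That base case is precisely where the $\epsilon n$ slack and the threshold $n_\epsilon$ in the statement come from; your sketch instead attributes the $\epsilon n$ to telescoping error in a greedy minimum-degree elimination and disposes of low-degree vertices by ``peeling,'' which mislocates the one genuinely non-elementary ingredient. If you intend to defer to the references, defer cleanly; if you intend to sketch their proof, the sketch needs to name the max-degree-$3$ pathwidth bound as the base case rather than promise a local recurrence that does not close for $d = 3$.
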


\
\begin{table}
\centering
\setlength{\tabcolsep}{0.5em}%
\noindent \begin{tabular}{ p{20pt} p{20pt} p{20pt} p{20pt}  p{20pt}  p{20pt} p{20pt}  p{20pt} p{20pt}  } \toprule
    d & 3 & 4 & 5 & 6 & 7 & 8 & 9 & 10 \\ 
   $c_d$ & 0.1667 & 0.3334 & 0.4334 & 0.5112 & 0.5699 & 0.6163 & 0.6538 & 0.6847  \\ \midrule
      d & 11 & 12 & 13 & 14 & 15 & 16 & 17  \\ 
       $c_d$ & 0.7105 & 0.7325 & 0.7514 & 0.7678 & 0.7822 & 0.7949 & 0.8062 \\
   \bottomrule
\end{tabular}\\
\vspace{0.5em}
\caption{Numerically obtained constants $c_d$, $3 \leq d \leq 17$, used in Lemma~\ref{fomin}; originally given in Table 6.1 of~\cite{gaspers}.}
 \label{tab:coeffs}
\end{table}

Since the structure of \textsc{$k$-Co-Path Set} naturally provides some constraints on the degree sequence of yes-instances, we are able to apply Lemma~\ref{fomin} to our reduced instances to effectively bound treewidth. We first find an upper bound on the number of degree-3 vertices in any yes-instance of \textsc{$k$-Co-Path Set}.

\begin{lemma}
Let $n_i$ be the number of vertices of degree i in a graph $G$ for any i $\in \mathbb{Z}^+$, and $\Delta$ be the maximum degree of $G$. If $(G, k)$ is a yes-instance of \textsc{$k$-Co-Path Set}, then $n_3 \leq 2k - (\sum_{i=4}^\Delta{(i-2)n_i})$.
\label{lem:n3bound}
\end{lemma}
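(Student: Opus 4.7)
The plan is a straightforward double-counting argument using the degree constraint forced by the existence of a co-path solution. The key observation is that if $F \subseteq E$ is a co-path set with $|F| = k$, then $G \setminus F = G[E \setminus F]$ is a disjoint union of paths, so every vertex has degree at most $2$ in $G \setminus F$. Consequently, for every vertex $v$ with $\deg_G(v) \geq 2$, at least $\deg_G(v) - 2$ of the edges incident to $v$ must lie in $F$.

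First I would fix such a solution $F$ and, for each vertex $v$, let $f(v)$ denote the number of edges in $F$ incident to $v$. By the above observation, $f(v) \geq \deg_G(v) - 2$ whenever $\deg_G(v) \geq 3$, and trivially $f(v) \geq 0$ otherwise. Summing over all vertices and using that every edge of $F$ is incident to exactly two vertices,
\begin{equation*}
2k \;=\; 2|F| \;=\; \sum_{v \in V} f(v) \;\geq\; \sum_{v : \deg_G(v) \geq 3} (\deg_G(v) - 2) \;=\; \sum_{i=3}^{\Delta} (i-2)\, n_i.
\end{equation*}

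Finally, I would isolate the $i=3$ term, noting that $(3-2)n_3 = n_3$, which gives
\begin{equation*}
n_3 \;\leq\; 2k - \sum_{i=4}^{\Delta} (i-2)\, n_i,
\end{equation*}
as desired. There is no substantial obstacle here; the entire argument is a one-line handshake-style inequality, and the only thing to be careful about is restricting the sum to vertices of degree at least $3$ (since the lower bound $\deg_G(v) - 2$ is vacuous or negative for smaller-degree vertices) and then recognizing the coefficient of $n_3$ is exactly $1$ so that the claimed rearrangement is clean.
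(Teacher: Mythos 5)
Your proof is correct and is essentially identical to the paper's: both count, for each vertex of degree $j \geq 3$, the at least $j-2$ incident edges that must be deleted, and use the handshake-style bound that each deleted edge is counted at most twice to obtain $\sum_{i=3}^{\Delta}(i-2)n_i \leq 2k$. Your write-up just makes the double-counting slightly more explicit than the paper's one-line version.
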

\begin{proof}
Since $(G,k)$ is a yes-instance, removing some set of at most $k$ edges results in a graph of maximum degree $2$. For a vertex of degree $j \geq 3$, at least $j - 2$ incident edges must be removed. Thus, $n_3 + 2n_4 + 3n_5 + \ldots + (\Delta-2)n_\Delta \leq 2k$ (each removed edge counts twice -- once for each endpoint). 

\end{proof}

\begin{lemma}
Let $(G,k)$ be an instance of \textsc{$k$-Co-Path Set} such that $G$ has $n$ vertices and max degree at most $\Delta \in \{3, \ldots, 17\}$. Then the treewidth of $G$ is bounded by $k/3 + \epsilon n + c$, for some constant $c$ and any $\epsilon > 0$. A tree decomposition of the corresponding width can be constructed in polynomial time in $n$. 
\label{our_tw}
\end{lemma}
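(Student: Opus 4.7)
The plan is to combine the structural treewidth bound from Lemma~\ref{fomin} with the degree-sequence restriction forced on any yes-instance by Lemma~\ref{lem:n3bound}.

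First I would invoke Lemma~\ref{fomin} on $G$. Since $\Delta \leq 17$, the $n_{\geq 18}$ term vanishes, so for all $n > n_\epsilon$,
\begin{equation*}
tw(G) \leq \sum_{i=3}^{\Delta} c_i n_i + \epsilon n.
\end{equation*}
Next I would isolate the $n_3$ contribution and substitute the bound $n_3 \leq 2k - \sum_{i=4}^{\Delta}(i-2)n_i$ from Lemma~\ref{lem:n3bound}, using $c_3 \geq 0$ to obtain
\begin{align*}
\sum_{i=3}^{\Delta} c_i n_i &\leq c_3\Bigl(2k - \sum_{i=4}^{\Delta}(i-2)n_i\Bigr) + \sum_{i=4}^{\Delta} c_i n_i \\
&= 2c_3 k + \sum_{i=4}^{\Delta}\bigl(c_i - (i-2)c_3\bigr) n_i.
\end{align*}

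The crux of the argument is verifying that the coefficient $c_i - (i-2)c_3$ is non-positive for every $i \in \{4,\ldots,17\}$, reading off Table~\ref{tab:coeffs} together with $c_3 = 1/6$ (for which $0.1667$ is merely the displayed decimal approximation). The $i=4$ case is tight (it gives $0$, matching $c_4 = 1/3$), and the inequality becomes strictly stronger as $i$ grows, so the residual sum is at most $0$. Consequently $\sum_{i=3}^{\Delta} c_i n_i \leq 2c_3 k = k/3$, which yields $tw(G) \leq k/3 + \epsilon n$ for all $n > n_\epsilon$.

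To complete the statement, I would handle small graphs separately: when $n \leq n_\epsilon$, the trivial bound $tw(G) \leq n \leq n_\epsilon$ suffices, so setting $c := n_\epsilon$ captures both regimes. Polynomial-time constructibility of the tree decomposition carries over unchanged from Lemma~\ref{fomin}. The main obstacle is purely numerical: confirming the $14$ inequalities $c_i \leq (i-2)c_3$ for $i=4,\ldots,17$ from Table~\ref{tab:coeffs}; beyond this arithmetic check, no additional structural ideas appear to be required.
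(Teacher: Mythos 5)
Your proof is correct and follows essentially the same route as the paper: apply Lemma~\ref{fomin}, substitute the bound on $n_3$ from Lemma~\ref{lem:n3bound}, and verify that $c_i \leq (i-2)c_3 = (i-2)/6$ for all $i \in \{4,\ldots,17\}$ so the residual sum is non-positive. The only cosmetic difference is the treatment of $n \leq n_\epsilon$: the paper pads $G$ with $n_\epsilon$ isolated vertices and appeals to subgraph-monotonicity of treewidth, whereas you dispatch small graphs with the trivial bound $tw(G) \leq n \leq n_\epsilon$ and absorb it into the constant $c$ --- both are valid.
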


\begin{proof}
Let $n_\epsilon$ be defined as in Lemma \ref{fomin}.
Let $G'$ be the graph formed by adding $N = n_\epsilon$ isolates to $G$. 
By Lemma \ref{fomin}, because $G'$ has maximum degree at most $\Delta$, $tw(G') \leq (1/6) n_3 + (1/3) n_4 + \ldots + c_{\Delta}n_{\Delta} +\epsilon (N+n)$. We can substitute the bound for $n_3$ from Lemma \ref{lem:n3bound}, which yields:
\begin{align*}
tw(G') & \leq   \frac{2k - (\sum_{i=4}^{\Delta}{(i-2)n_i})}{6} + \frac{n_4}{3} + \ldots + c_{\Delta}n_{\Delta} +\epsilon (N+n)\\
& \leq  \frac{k}{3} + \epsilon(n+N).
\end{align*}
Note that the inequality holds because we can pair the negative terms of ${(\sum_{i=4}^{\Delta}{(i-2)n_i})}/{6}$ with the corresponding terms of ${n_4}/{3} + \ldots + c_{\Delta}n_{\Delta}$ and the value of $c_jn_j - {(j-2)(n_j)}/{6}$ is non-positive for all $j \in [4,17]$.  Since $N = n_\epsilon$ is a constant, we have $tw(G') \leq k/3 + \epsilon n + c$. Since $G \subseteq G'$ and treewidth is monotone under subgraph inclusion, this proves the claim.
\end{proof}

We point out that when applying Lemma~\ref{our_tw} to reduced instances, computing the desired tree decomposition is polynomial in $k$ (since they are subgraphs of a $6k$-kernel). 

\subsection{The Algorithm \texttt{copath}}\label{sec:copathalg}
This section describes how we combine the above techniques to prove Theorem~\ref{main}. As shown in Algorithm~\ref{alg:CCCP}, we start by applying \texttt{6k-kernel}~\cite{feng2} to find $G'$, a kernel of size at most $6k$; this process deletes $k - k'$ edges. 
We then guess the number of edges $k_1 \in [0, k']$ to remove during branching, and use \texttt{deg-branch} to create a set of reduced instances $Q_{k_1}$, 
each of which have $k' - k_1$ edges. Note that \texttt{deg-branch} considers \emph{all} possible reduced instances, and thus if a (cc-)solution exists, it is contained in at least one reduced instance. To ensure the complexity of finding the reduced instances does not dominate the running time, we set the degree bound $D$ of the reduced instances to be 10 (any choice of $10 \leq D \leq 17$ is valid). By considering all possible values of $k_1$, we are assured that if $(G,k)$ is a yes-instance, some $Q_{k_1}$ contains a yes-instance. Each reduced instance is then passed to \twAlg, which correctly decides the problem with probability $2/3$. 

\begin{algorithm}[htb]
\SetKwFunction{cpsbranch}{\mainAlg}
\DontPrintSemicolon
\SetKwProg{cpsbranchprog}{Algorithm}{}{}
\cpsbranchprog{  \cpsbranch{$G$,$k$}  }{
 $(G', k') =$ \texttt{6k-kernel}$(G, k)$\; 

 \For{$k_1 \leftarrow 0$ \KwTo $k'$}{
 $Q_{k_1} =$  \texttt{deg-branch}$(G', k', k_1, 10, k_1)$\;  
 \ForEach{$(G_i, k_2) \in Q_{k_1}$}{
  \lIf{\emph{\texttt{tw-copath}}$(G_i,k_2)$}{\Return \texttt{true}}
}
} 
\Return \texttt{false}\;
}
  \caption{Deciding \textsc{$k$-Co-Path Set}}
  \label{alg:CCCP}
\end{algorithm}

\begin{proof}[Proof of Theorem \ref{main}]  
We now analyze the running time of \mainAlg, as given in Algorithm~\ref{alg:CCCP}. By Lemma~\ref{lemma:search-tree}, the size of each $Q_{k_1}$ is $O(1.561^{k_1})$. For each reduced instance $(G_i, k_2)$ in $Q_{k_1}$, we have $tw(G_i) \leq k_2/3 + \epsilon (6k) + c$ 
by Lemma~\ref{our_tw}. 

Applying Theorem~\ref{cutcount}, \twAlg runs in time $O^*(4^{k_2/3 + \epsilon 6k})$ for each reduced instance $(G_i, k_2)$ in $Q_{k_1}$ (with success probability at least 2/3). Each iteration of the outer \texttt{for} loop can then be completed in time
$$ O^*(1.561^{k_1} 4^{k_2/3 + \epsilon6k}) = O^*(4^{k/3 + \epsilon6k}) = O^*(1.588^k),$$
where we use that $k_1 + k_2 = k' \leq k$, and choose $\epsilon < 10^{-5}$. Since this loop runs at most $k+1$ times, this is also a bound on the overall computational complexity of \mainAlg. Additionally \mainAlg is linear-fpt, as the kernelization of \cite{feng2} is $O(n)$, and the kernel has size $O(k)$, avoiding any additional poly($n$) complexity from the \twAlg subroutine. Note that by Lemma \ref{fomin} the tree decomposition can be found in polynomial time in the size of the reduced instance. Since reduced instances are subsets of $6k$-kernels, the linearity is unaffected because the graph has size polynomial in $k$. 
\end{proof}

	\section{Conclusion}
	This paper gives an $O^*(4^{tw})$ fpt algorithm for \textsc{Co-Path Set}. By coupling this with kernelization and branching, we derive an $O^*(1.588^k)$ linear-fpt algorithm for deciding \textsc{$k$-Co-Path}, significantly improving the previous best-known result of $O^*(2.17^k)$. We believe that the idea of  combining a branching algorithm which guarantees equivalent instances with bounds on the degree sequence from the problem's constraints can be applied to other problems in order to obtain a bound on the treewidth (allowing treewidth-parameterized approaches to be extended to general graphs).

One natural question is whether similar techniques extend to the generalization of \textsc{Co-Path Set} to $k$-uniform hypergraphs (as treated in Zhang et al.~\cite{zhang}). It is also open whether the combined parameterization asking for a co-path set of size $k$ resulting in $\ell$ disjoint paths is solvable in sub-exponential fpt time. \looseness-1
	\section*{Acknowledgements}
	\footnotesize
This work supported in part by the Gordon \& Betty Moore Foundation under DDD Investigator Award GBMF4560 and the DARPA GRAPHS program under SPAWAR Grant N66001-14-1-4063. Any opinions, findings, and conclusions or recommendations expressed in this publication are those of the author(s) and do not necessarily reflect the views of DARPA, SSC Pacific, or the Moore Foundation. We thank two anonymous reviewers for providing a simplification of our previous branching algorithm and pointing out the result from~\cite{gaspers} enabling us to branch on vertices with degree greater than 7. We also thank Felix Reidl for helpful suggestions on an earlier draft that significantly improved the presentation of the results.
	\bibliographystyle{plain}
	{\bibliography{co-path_bib}}
	\newpage

\end{document}